\newtheorem{theorem}{Theorem}
\newtheorem{definition}[theorem]{Definition}
\newtheorem{lemma}[theorem]{Lemma}
\newtheorem{fact}[theorem]{Fact}
\newtheorem{property}{Property}
\newcommand{\II}{\mathbb{I}}
\newcommand{\X}{\mathsf{x}}
\newcommand{\Z}{ \mathsf{z}}
\renewcommand{\H}{\mathbf{H}}
\newcommand{\Exp}{\mathop{\mathbf{E}\hspace{0.13em}}}
\newcommand{\eps}{\epsilon}
\newcommand{\half}{\frac{1}{2}}
\newcommand{\defeq}{\mathrel{\overset{\makebox[0pt]{\mbox{\normalfont\tiny\sffamily def}}}{=}}}
\renewcommand{\exp}{\mathsf{exp}}
\newcommand{\bits}{\{0,1\}}
\renewcommand{\eqref}[1]{\textrm{eq.~}(\ref{#1})}
\renewcommand{\exp}{\mathrm{exp}}
\newcommand{\cay}{\operatorname{Cay}}
\newenvironment{proof_of}[1]{\noindent {\bf Proof of #1:}
    \hspace*{1mm}}{\hspace*{\fill} $\Box$ }
\begin{document}

\title{NLTS Hamiltonians from Good Quantum Codes}

\author{Anurag Anshu}
\affiliation{%
  \institution{School of Engineering and Applied Sciences, Harvard University}
  \city{Cambridge, Massachusetts}
  \country{USA}}
\email{anuraganshu@fas.harvard.edu}

\author{Nikolas P. Breuckmann}
\affiliation{%
  \institution{School of Mathematics, University of Bristol}
  \city{Bristol}
  \country{United Kingdom}}
\email{niko.breuckmann@bristol.ac.uk}

\author{Chinmay Nirkhe}
\affiliation{%
  \institution{IBM Quantum, MIT-IBM Watson AI Lab}
  \city{Cambridge, Massachusetts}
  \country{USA}}
\email{nirkhe@ibm.com}

\begin{abstract}
The NLTS (No Low-Energy Trivial State) conjecture of Freedman and Hastings posits that there exist families of Hamiltonians with all low energy states of non-trivial complexity (with complexity measured by the quantum circuit depth preparing the state).
We prove this conjecture by showing that families of constant-rate and linear-distance qLDPC codes correspond to NLTS local Hamiltonians.
\end{abstract}

\begin{CCSXML}
<ccs2012>
   <concept>
       <concept_id>10003752.10003777.10003784</concept_id>
       <concept_desc>Theory of computation~Quantum complexity theory</concept_desc>
       <concept_significance>500</concept_significance>
       </concept>
    <concept>
       <concept_id>10003752.10003753.10003758.10010626</concept_id>
       <concept_desc>Theory of computation~Quantum information theory</concept_desc>
       <concept_significance>500</concept_significance>
       </concept>
 </ccs2012>
\end{CCSXML}
\ccsdesc[500]{Theory of computation~Quantum complexity theory}
\ccsdesc[500]{Theory of computation~Quantum information theory}

\keywords{Quantum PCP conjecture, Quantum circuit lower bounds, Local Hamiltonian, Ground states}

\maketitle

\footnotetext{This work has been published in the proceedings of STOC ’23, June 20–23, 2023, Orlando, FL, USA.}

\section{Introduction}

Ground- and low-energy states of local Hamiltonians are the central objects of study in condensed matter physics. For example, the $\QMA$-complete local Hamiltonian problem is the quantum analog of the $\NP$-complete constraint satisfaction problem (CSP) with ground-states (or low-energy states) of local Hamiltonians corresponding to solutions (or near-optimal solutions) of the problem~\cite{KSV02}. A sweeping insight into the computational properties of the low energy spectrum is embodied in the quantum PCP conjecture, which is arguably the most important open question in quantum complexity theory \cite{10.1145/2491533.2491549}. Just as the classical PCP theorem establishes that CSPs with constant fraction promise gaps remain $\NP$-complete, the quantum PCP conjecture asserts that local Hamiltonians with a constant fraction promise gap remain $\QMA$-complete. 
Despite numerous results providing evidence both for \cite{10.1145/1536414.1536472,FH14,NatarajanV18} and against \cite{10.5555/2011637.2011639,10.1145/2488608.2488719,10.1007/s11128-014-0877-9} the validity of the quantum PCP conjecture, the problem has remained open for nearly two decades. 

The difficulty of the quantum PCP conjecture has motivated a flurry of research beginning with Freedman and Hastings' \emph{No low-energy trivial states (NLTS) conjecture} \cite{FH14}. The NLTS conjecture posits that there exists a fixed constant $\eps > 0$ and a family of $n$ qubit local Hamiltonians such that every state of energy $\leq \eps n$ requires a quantum circuit of super-constant depth to generate. The NLTS conjecture is a necessary consequence of the quantum PCP conjecture, because $\QMA$-complete problems do not have $\NP$ solutions and a constant-depth quantum circuit generating a low-energy state would serve as a $\NP$ witness. Thus, this conjecture addresses the inapproximability of local Hamiltonians by classical means.

Previous progress \cite{8104078,nirkhe_et_al:LIPIcs:2018:9095,Eldar2019RobustQE,BravyiKKT19,anshu_et_al:LIPIcs.ITCS.2022.6,comb-nlts} provided solutions to weaker versions of the NLTS conjecture, but the complete conjecture 
had eluded the community. For a broader survey on the NLTS conjecture including the resolution proved in this work, we recommend the Ph.D. thesis of Nirkhe \cite{Nirkhe:EECS-2022-236}.

\begin{theorem}[No low-energy trivial states]
There exists a fixed constant $\eps > 0$ and an explicit family of $O(1)$-local frustration-free commuting Hamiltonians $\{\H^{(n)}\}_{n = 1}^\infty$ where $\H^{(n)} = \sum_{i = 1}^m h_i^{(n)}$ acts on $n$ particles and consists of $m = \Theta(n)$ local terms such that for any family of states $\{\psi_n\}$ satisfying $\tr(\H^{(n)} \psi) < \eps n$, the circuit complexity of the state $\psi_n$ is at least $\Omega(\log n)$.
\end{theorem}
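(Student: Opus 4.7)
The plan is to take the local Hamiltonian $\H^{(n)}$ to be the stabilizer Hamiltonian of a good quantum CSS code---one of the recently constructed constant-rate, linear-distance, $O(1)$-local qLDPC codes (e.g.\ Panteleev-Kalachev). Concretely, for each stabilizer generator $S_i$ I put $h_i = (I - S_i)/2$, giving $\Theta(n)$ pairwise commuting projectors, frustration-free on the code space, with ground-space dimension $2^k$ where $k = \Theta(n)$. The code's linear distance $d = \Omega(n)$ will be the source of non-triviality.

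The overall strategy has two halves that I would pit against each other. On the \emph{lower-bound side}, I want to show that every state $\psi$ with $\tr(\H^{(n)} \psi) < \eps n$ carries a large amount of ``logical randomness''---made precise, say, as a subset $S \subseteq [n]$ and a basis in which measuring $\psi_S$ yields an outcome distribution with entropy $\Omega(n)$ (or correlations/mutual information at a scale not expressible by shallow circuits). The idea is that, on average, only an $O(\eps)$ fraction of checks are violated; because the $X$- and $Z$-type check matrices of a good qLDPC code are themselves classical LDPC codes with expansion/soundness, one can \emph{approximately} Pauli-correct $\psi$ into the code space, getting a logical mixture $\rho_{\mathrm{log}}$ on $\Theta(n)$ logical qubits. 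This mixture inherits the physical randomness of $\psi$, so in an appropriate logical basis it must have large entropy, which then pulls back (via the encoding isometry) to a large-entropy statement on the physical system.

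On the \emph{upper-bound side}, I need to argue that states produced from a product state by a depth-$d$ circuit cannot carry such randomness unless $d = \Omega(\log n)$. For this I would use a light-cone argument adapted to the LDPC (non-geometric) setting: each output qubit depends on at most $c^{d}$ input qubits and ancillas, so reduced density matrices of $U\ket{0}^{\otimes n}$ on subsets $T$ decouple or cluster into independent blocks of size $c^{d}$ whenever $T$ is chosen so that its light-cone preimages are disjoint. Such a clustering property caps the ``non-local'' entropy/mutual-information quantity used in the lower bound, and the two bounds become inconsistent once $c^{d} \ll n / k$, which translates into $d = \Omega(\log n)$.

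The step I expect to be hardest---and where the novel work lies---is \emph{bridging low energy to the code space} in a way that is compatible with the circuit-depth bound. A low-energy state of a quantum code need not be close in trace distance to any codeword: the syndrome measurement is destructive, and there is no $\ell_1$-robustness built in. Consequently I cannot simply replace $\psi$ by a nearby code state, and any approximate decoding must respect the depth-$d$ circuit structure (since otherwise correcting $\psi$ could itself increase depth and neutralize the argument). Making the decoding ``gentle'' enough that the entropic witness chosen above is preserved, while simultaneously being strong enough to certify $\Omega(n)$ logical randomness, is the core technical obstacle; I expect it to require using both the distance and the expansion of the qLDPC code in tandem, together with a careful choice of the entropic quantity so that both the code-theoretic lower bound and the circuit-theoretic upper bound speak about the same object.
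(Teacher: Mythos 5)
Your starting point matches the paper (good qLDPC CSS code, Hamiltonian $h_i = (I-S_i)/2$), but the strategy you lay out after that is not the one the paper follows, and the step you flag as the ``core technical obstacle'' is precisely where your plan stalls and the paper takes a different road. You propose to approximately Pauli-correct a low-energy $\psi$ back into the code space to extract a logical mixture $\rho_{\mathrm{log}}$ with large entropy, but as you yourself observe, a state with $\tr(\H\psi) < \eps n$ need not be anywhere near the code space in trace distance, and there is no obvious ``gentle'' decoder that is simultaneously shallow and sound. The paper never attempts to decode the quantum state. Instead it only looks at the two \emph{classical} distributions $D_{\Z}$ and $D_{\X}$ obtained by measuring $\psi$ in the standard and Hadamard bases, shows (by a one-line Markov argument) that each is mostly supported on approximate codewords of $C_{\Z}$ (resp.\ $C_{\X}$), and then invokes a purely classical structural property of the code --- Property~\ref{property}, ``clustering of approximate codewords'' --- saying that every string violating few checks is either within $O(\delta n)$ of $C_{\X}^\perp$ (resp.\ $C_{\Z}^\perp$) or at distance $\Omega(n)$ from it. Crucially, this property is \emph{not} a consequence of constant rate and linear distance alone; the authors explicitly prove it only for the quantum Tanner codes of Leverrier--Z\'emor, using robustness and expander mixing. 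So ``take any good qLDPC code'' is not by itself enough in the paper's framework, and your plan would need to identify and prove an analogous structural lemma.

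The second place your proposal diverges is in what quantity gets lower-bounded and how the rate $k=\Theta(n)$ is used. You aim for an entropy $\Omega(n)$ statement and a light-cone/block-decoupling upper bound; the paper instead shows the weaker and more directly usable claim that one of $D_{\Z}, D_{\X}$ is \emph{well-spread}: it assigns constant mass to two sets that are $\Omega(n)$ apart in Hamming distance. The mechanism is an uncertainty-principle estimate (Fact~\ref{fact:fourier-transform}): if $D_{\Z}$ concentrated on a single cluster $B_{\Z}^i$ of size roughly $2^{r_{\X}}\binom{n}{O(\eps_1 n)}$ and $D_{\X}$ concentrated on a cluster $B_{\X}^j$ of size roughly $2^{r_{\Z}}\binom{n}{O(\eps_1 n)}$, then $\sqrt{|B_{\Z}^i||B_{\X}^j|/2^n}\approx 2^{-k/2 + o(n)}$ would force $D_{\X}(B_{\X}^j)$ to be small --- a contradiction. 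That is where the positive rate enters. Finally, well-spreadness $\Rightarrow$ depth $\Omega(\log n)$ is obtained not by a direct light-cone decoupling argument but via Chebyshev-polynomial approximate ground-state projectors applied to the trivial Hamiltonian $\Exp_i U\ketbra{1}_i U^\dagger$ of the preparing circuit (Fact~\ref{fact:well-spread-to-lb}). Your light-cone intuition is pointing in the right general direction, but the entropy-vs-decoupling version you sketch is not the statement that is proved, and without the clustering property and the two-basis uncertainty trade-off your lower-bound side has no concrete witness to hand to the circuit-depth bound.
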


The local Hamiltonians for which we can show such robust circuit-lower bounds correspond to constant-rate and linear-distance quantum LDPC error-correcting codes with an additional property related to the clustering of approximate code-words of the underlying classical codes. 
We show that the property holds for families of good qLDPC codes. 
While we show that the property is sufficient for NLTS, it is an interesting open question if the property is inherently satisfied by all constant-rate and linear-distance constructions.

\vspace{0.1in}

\noindent\textbf{Quantum code.} To formalize this property, recall a CSS code with parameters $[[n,k,d]]$. The code is constructed by taking two classical codes $C_{\X}$ and $C_{\Z}$ such that $C_{\Z} \supset C_{\X}^\perp$.
{The code $C_{\Z}$ is the kernel of a row- and column-sparse matrix $H_{\Z} \in \mathbb{F}_2^{m_\Z \times n}$; likewise, $C_{\X}$ and $H_{\X}\in \mathbb{F}_2^{m_\X \times n}$.}
The rank of $H_{\Z}$ will be denoted as~$r_{\Z}$ and likewise~$r_{\X}$ is the rank of $H_{\X}$.
Therefore, $n = k + r_{\X} + r_{\Z}$.
If the code is constant-rate and linear-distance, then $k, d, r_{\Z}, r_{\X} = \Omega(n)$. {For the codes considered in this work, we also have $m_{\Z}, m_{\X}=\Omega(n)$.}\\

\noindent For any subset $S \subset \bits^n$, define a distance measure $\abs{\cdot}_S$ as $\abs{y}_S = \min_{s \in S} \abs{y + s}$ where $\abs{\cdot}$ denoted Hamming weight. 
We define $G_{\Z}^\delta$ as the set of vectors which violate at most a $\delta$-fraction of checks from $C_{\Z}$, i.e. {$G_{\Z}^\delta = \{y : \abs{H_{\Z} y} \leq \delta m_{\Z}\}$}. 
We similarly define $G_{\X}^\delta$.

\begin{property}[Clustering of approximate code-words]\label{property}
We say that a $[[n,k,d]]$ CSS code defined by classical codes $(C_{\X}, C_{\Z})$ \emph{clusters approximate code-words} if there exist constants $c_1, c_2, \delta_0$ such that for sufficiently small $0 \leq \delta < \delta_0$ and every vector $y \in \bits^n$,
\begin{enumerate}
    \item If $y\in G_{\Z}^\delta$, then either $\abs{y}_{C_{\X}^\perp} \leq c_1 \delta n$ \ or else \ $\abs{y}_{C_{\X}^\perp} \geq c_2 n$.
    \item If $y\in G_{\X}^\delta$, then either $\abs{y}_{C_{\Z}^\perp} \leq c_1 \delta n$ \ or else \ $\abs{y}_{C_{\Z}^\perp} \geq c_2 n$.
\end{enumerate}
\end{property}

Note that this property holds for classical Tanner codes with spectral expansion (see \cite[Theorem 4.3]{comb-nlts}) and was used to prove the combinatorial NLTS conjecture.
In Section~\ref{sec:qLDPC} we discuss the connection between Property~\ref{property} and expansion properties of parity check matrices of codes.

\vspace{0.1in}

\noindent\textbf{Local Hamiltonian definition.} The aforementioned quantum codes lead to a natural commuting frustration-free local Hamiltonian.
{For every row $w_{\Z}$ of $H_{\Z}$ -- i.e. a stabilizer term $Z^{w_{\Z}}$ of the code, we associate a Hamiltonian term $\half(\II - Z^{w_{\Z}})$. We define $\H_{\Z}$ as the sum of all such terms for $H_{\Z}$.}
$\H_{\X}$ is defined analogously and the full Hamiltonian is $\H = \H_{\X} + \H_{\Z}$.
The number of local terms is $m_{\X} + m_{\Z} = \Theta(n)$ and $\H$ has zero ground energy.
We refer the reader to the preliminaries of \cite[Section 2]{anshu_et_al:LIPIcs.ITCS.2022.6} for more technical definitions and notation.

\vspace{0.1in}

\noindent\textbf{Open questions.} There are three questions that we leave unanswered.
\begin{itemize}
    \item Does Property \ref{property} ``morally'' hold for all constant-rate and linear-distance quantum codes?
    \item Grigoriev \cite{Grigoriev01} constructs instances of $3$-XOR problem that cannot be well approximated by $O(n)$ degree sum-of-squares algorithms. The proof relies on expansion (see \cite[Lecture 3-2]{baraksteu}) to construct equivalence classes, similar to our construction of equivalence classes in Section \ref{sec:NLTSproof}. A recent improvement to Grigoriev's theorem \cite{sum-of-squares-codes} uses the small-set boundary and co-boundary expansion similar to Property \ref{property}. It would be interesting to better understand this connection between quantum circuit lower bounds and sum-of-squares lower bounds.
    \item Overlap Gap Property \cite{Gamarnik21} holds for some constraint satisfaction problems (CSPs) whose near-optimal solutions are either close or far away in hamming distance. It is widely used in proving lower bounds on `stable' classical algorithms for such CSPs. Overlap gap property bears resemblance to Property \ref{property} (when the distance measure is hamming weight), which suggests possible connections between the two lower bound techniques.  
    \item Our construction does \emph{not} require quantum local testability. Property \ref{property} is sufficient for clustering of the classical distributions of low-energy states but it is weaker than local testability. 
    \cite{8104078} used local testability to argue clustering for their proof that local testability implies NLTS. What are the implications of codes with Property \ref{property} for the quantum PCP conjecture \cite{10.1145/2491533.2491549}?
    \item Can our proof techniques be generalized to prove non-trivial lower bounds for non-commuting Hamiltonians?
\end{itemize}

\section{Proof of the NLTS theorem}
\label{sec:NLTSproof}

The proof, that the local Hamiltonian corresponding to a constant-rate and linear-distance code satisfying Property~\ref{property} is NLTS, is divided into a few steps. We first show that the classical distributions generated by measuring any low-energy state in the standard or Hadamard bases are approximately supported on a particular structured subset of vectors. Then, we show that the subsets cluster into a collection of disjoint components which are far in Hamming distance from each other. Finally, we show that the distribution in one of the two bases cannot be too concentrated on any particular cluster. This shows that the distribution is \emph{well-spread} which can be used to prove a circuit depth lower bound.

\vspace{0.1in}

\noindent\textbf{The supports of the underlying classical distributions.} Consider a state $\psi$ on $n$ qubits such that $\tr(\H \psi) \leq \eps n$. Let $D_{\X}$ and $D_{\Z}$ be the distributions generated by measuring the $\psi$ in the (Hadamard) $X-$ and (standard) $Z-$ bases, respectively. We find that $D_{\Z}$ is largely supported on $G_{\Z}^{O(\eps)}$. Formally, this is because, by construction,
\begin{equation*}
    \eps n \geq \tr(\H \psi) \geq \tr(\H_{\Z} \psi) = \Exp_{y \sim D_{\Z}} \abs{H_{\Z} y}.
\end{equation*}
{Here, the last equality holds since for a Pauli operator $Z^a$, $$\bra{y}\frac{\II-Z^a}{2}\ket{y}=\frac{1-(-1)^{a.y}}{2}=a.y .$$} Let $q \defeq D_{\Z}(G_{\Z}^{\eps_1})$ be the probability mass assigned by $D_{\Z}$ to $G_{\Z}^{\eps_1}$. Then,
\begin{equation*}
    \Exp_{y \sim D_{\Z}} \abs{H_{\Z} y} \geq 0 \cdot q + (1-q)\cdot \eps_1 {m_{\Z}} = (1-q) \eps_1 {m_{\Z}}.
\end{equation*}
Therefore, $D_{\Z}(G_{\Z}^{\eps_1}) \geq 1- \eps n / (\eps_1 {m_{\Z}})$. 
A similar argument shows that $D_{\X}(G_{\X}^{\eps_1}) \geq 1- \eps n / (\eps_1 {m_{\X}})$. 
With the choice $\eps_1 = \frac{200n}{ \min\{{m_{\X},m_{\Z}}\}} \cdot \eps$, we find 
\begin{equation*}
\label{eq:largemass}
D_{\Z}(G_{\Z}^{\eps_1}), D_{\X}(G_{\X}^{\eps_1}) \geq \frac{199}{200}
\end{equation*}
for both the bases.  

\vspace{0.1in}

\noindent\textbf{The supports are well clustered.} Given that $D_{\Z}$ is well supported on $G_{\Z}^{\eps_1}$, it is helpful to understand the structure of $G_{\Z}^{\eps_1}$. For $x, y \in G_{\Z}^{ \eps_1}$, notice that $x \oplus y \in G_{\Z}^{ 2\eps_1}$ since $x \oplus y$ satisfies every check that both $x$ and $y$ satisfy. By Property \ref{property} (and assuming $2\eps_1\leq \delta_0$), then either
\begin{equation*}
    \abs{x \oplus y}_{C_{\X}^\perp} \leq 2 c_1 \eps_1 n \quad \text{or else} \quad \abs{x \oplus y}_{C_{\X}^\perp} \geq c_2 n.
\end{equation*}
Define a relation `$\sim$' such that for $x,y \in G_{\Z}^{\eps_1}$, $x \sim y$ iff $\abs{x \oplus y}_{C_{\X}^\perp} \leq 2 c_1 \eps_1 n$.
To prove that the relation is transitive and therefore an equivalence relation, notice that if $x \sim y$ and $y \sim z$, then
\begin{equation*}
\abs{x \oplus z}_{C_{\X}^\perp} \leq \abs{x \oplus y}_{C_{\X}^\perp} + \abs{y \oplus z}_{C_{\X}^\perp} \leq 4c_1 \eps_1 n.
\end{equation*}
However, $x \oplus z \in G_{\Z}^{2\eps_1}$ and for sufficiently small $\eps_1$ such that $4c_1 \eps_1 < c_2$, Property \ref{property} implies that $\abs{x \oplus z}_{C_{\X}^\perp} \leq 2c_1\eps_1n$. Thus, $x \sim z$ and hence $\sim$ forms an equivalence relation. We can now divide the set $G_{\Z}^{\eps_1}$ into clusters $B_{\Z}^{1}, B_{\Z}^{2}, \ldots,$ according to the equivalence relation $\sim$. Furthermore, the distance between any two clusters is $\geq c_2 n$, since for $x$ in one cluster and $x'$ in another cluster, we have $\abs{x\oplus x'} \geq \abs{x\oplus x'}_{C_{\X}^\perp}\geq c_2n$. Lastly, the same argument holds for~$G_{\X}^{\eps_1}$. 

\vspace{0.1in}

\noindent\textbf{The distributions are not concentrated on any one cluster.}
To apply known circuit-depth lower bounding techniques to $D_{\Z}$, it suffices to show that $D_{\Z}$ is not concentrated on any one cluster~$B_{\Z}^i$.
However, it is not immediate how to show this property for $D_{\Z}$. Instead, what we can show is that is impossible for both $D_{\Z}$ to be concentrated on any one cluster $B_{\Z}^i$ and $D_{\X}$ to be concentrated on any one cluster $B_{\X}^j$. 

\begin{lemma}
\label{lem:noconc}
For $\eps_1$ such that $2 c_1 \eps_1 \leq \qty(\frac{k-1}{4n})^2$,
either $\forall \ i$, $D_{\Z}(B_{\Z}^i) < 99/100$ or else $\forall \ j$, $D_{\X}(B_{\X}^j) < 99/100$.
\end{lemma}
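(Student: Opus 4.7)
The plan is a proof by contradiction combined with an uncertainty-type bound on the overlap between a $Z$-basis projector and an $X$-basis projector. Suppose, toward a contradiction, that there exist indices $i^*, j^*$ with $D_{\Z}(B_{\Z}^{i^*}) \geq 99/100$ and $D_{\X}(B_{\X}^{j^*}) \geq 99/100$ simultaneously. Define
\begin{equation*}
\Pi_S \defeq \sum_{y \in B_\Z^{i^*}}\ketbra{y}, \qquad \Pi_T \defeq H^{\otimes n}\Bigl(\sum_{y \in B_\X^{j^*}}\ketbra{y}\Bigr)H^{\otimes n},
\end{equation*}
so that the two hypotheses read $\|\Pi_S\ket{\psi}\|^2, \|\Pi_T\ket{\psi}\|^2 \geq 99/100$. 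The goal will be to upper-bound $\|\Pi_T \Pi_S\|_{op}$ combinatorially and lower-bound $\|\Pi_T \Pi_S \ket{\psi}\|$ via the concentration hypotheses, and observe that these two bounds are incompatible.

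The first step is a size bound on each cluster. By the definition of the equivalence relation $\sim$, fixing any $x \in B_\Z^{i^*}$ gives $B_\Z^{i^*} \subseteq x + C_\X^\perp + \mathcal{B}(0,\, 2c_1 \eps_1 n)$, a union of $\abs{C_\X^\perp} = 2^{r_\X}$ Hamming balls of radius $2c_1 \eps_1 n$. The standard volume bound $\abs{\mathcal{B}(0, pn)} \leq 2^{h(p) n}$ (with $h$ the binary entropy) therefore yields $\abs{B_\Z^{i^*}} \leq 2^{r_\X + h(2c_1 \eps_1) n}$, and analogously $\abs{B_\X^{j^*}} \leq 2^{r_\Z + h(2c_1 \eps_1) n}$. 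Since $n = k + r_\X + r_\Z$, the product satisfies $\abs{B_\Z^{i^*}} \cdot \abs{B_\X^{j^*}}/2^n \leq 2^{-k + 2 h(2c_1\eps_1) n}$.

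The second step is a direct Fourier calculation: $\|\Pi_T \Pi_S\|_{op}^2 \leq \|\Pi_T \Pi_S\|_F^2 = \tr(\Pi_S \Pi_T \Pi_S) = \abs{B_\Z^{i^*}} \cdot \abs{B_\X^{j^*}}/2^n$, where the middle equality uses $\bra{y}\Pi_T\ket{y} = \abs{B_\X^{j^*}}/2^n$ for every computational basis state $\ket{y}$ (each term $\abs{\braket{y}{+_t}}^2$ equals $2^{-n}$). Plugging in the previous size bound and applying $h(p) \leq 2\sqrt{p}$ together with the hypothesis $2c_1\eps_1 \leq \bigl((k-1)/(4n)\bigr)^2$ makes the exponent at most $-1$, so $\|\Pi_T\Pi_S\|_{op}^2 \leq 1/2$.

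Finally, a short triangle-inequality computation turns the concentration assumptions into a lower bound on the same quantity. Since $\Pi_S \ket{\psi} = \Pi_T \Pi_S \ket{\psi} + (\Id - \Pi_T) \Pi_S \ket{\psi}$ is an orthogonal decomposition,
\begin{equation*}
\|\Pi_T \Pi_S \ket{\psi}\|^2 = D_\Z(B_\Z^{i^*}) - \|(\Id - \Pi_T)\Pi_S\ket{\psi}\|^2 \geq \tfrac{99}{100} - \bigl(\|(\Id-\Pi_T)\ket{\psi}\| + \|(\Id - \Pi_S)\ket{\psi}\|\bigr)^2 \geq \tfrac{99}{100} - \tfrac{4}{100},
\end{equation*}
forcing $\|\Pi_T \Pi_S\|_{op}^2 \geq 95/100 > 1/2$ and contradicting the Fourier bound. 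I expect the only delicate point to be the matching of constants: the quadratic dependence of $\eps_1$ on $k/n$ in the hypothesis comes precisely from combining the entropic cluster-size estimate with the $\sqrt{\cdot}$ slack in $h(p) \leq 2\sqrt{p}$, and is calibrated so that $2^{-1} = 1/2$ is comfortably below the $95/100$ threshold produced by $99/100$-concentration. The cluster-covering bound and the Fourier calculation are both otherwise routine.
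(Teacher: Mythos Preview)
Your argument is correct and takes essentially the same route as the paper: both bound each cluster's size by $2^{r_{\X}}$ (resp.\ $2^{r_{\Z}}$) times a Hamming-ball volume, use $h(p)\le 2\sqrt{p}$ to control that volume, and then invoke the uncertainty-type estimate $\sqrt{|S|\cdot|T|/2^n}$ to contradict simultaneous $99/100$ concentration. The only packaging difference is that the paper abstracts the uncertainty step into a separate fact (proved via purification and the gentle-measurement lemma, so it applies to mixed $\psi$), whereas you compute $\|\Pi_T\Pi_S\|_{op}$ directly for a pure $\ket{\psi}$---an equally valid and slightly more elementary variant that extends to mixed states by the obvious purification.
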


\begin{proof}
Assume there exists some $i$ such that $D_{\Z}(B_{\Z}^i) \geq 99/100$. We will employ the following fact that captures the well-known uncertainty of measurements in the standard and Hadamard bases; a proof is provided in the appendix.
\begin{fact}
Consider a state $\psi$ and corresponding measurement distributions $D_{\X}$ and $D_{\Z}$. For all subsets $S,T \subset \bits^n$ it holds that  $D_{\X}(T) \leq 2\sqrt{1 - D_{\Z}(S)} + |S|\cdot |T|/2^n$.
\label{fact:fourier-transform}
\end{fact}
For any $j$, we employ this fact with $S = B_{\Z}^i$ and $T = B_{\X}^j$. To bound $|B_{\Z}^i|$, fix any string $z\in B_{\Z}^i$. Any other string $z'\in B_{\Z}^i$ has the property that its Hamming distance from $z\oplus w$ (for some $w\in C_{\X}^{\perp}$) is at most $2 c_1 \eps_1 n$. Since $\abs{C_{\X}^\perp} = 2^{\dim C_{\X}^\perp} = 2^{n - \dim C_{\X}} = 2^{r_{\X}}$,
the size of the cluster $B_{\Z}^i$ is at most
\begin{equation*}
    2^{r_{\X}} \cdot {n \choose 2 c_1 \eps_1 n} \leq 2^{r_{\X}} \cdot 2^{2 \sqrt{2 c_1 \eps_1}n}.
\end{equation*}
A similar bound can be calculated of $\abs{B_{\X}^j}\leq 2^{r_{\Z}} \cdot 2^{2 \sqrt{2 c_1 \eps_1}n}$. Then applying Fact \ref{fact:fourier-transform} with the bound on $\eps_1$ as stated in the Lemma,
\begin{equation*}
    \forall \ j, \quad D_{\X}\qty(B_{\X}^j) \leq \frac{1}{5} +  2^{r_{\X} + r_{\Z} - n} \cdot 2^{4 \sqrt{2 c_1 \eps_1} n} = \frac{1}{5} + 2^{-k + 4\sqrt{2 c_1 \eps_1} n} < \frac{99}{100}.
\end{equation*}
\end{proof}

\vspace{0.1in}

\noindent\textbf{A lower bound using the \emph{well-spread} nature of the distribution.}
Assume, without loss of generality, from Lemma~\ref{lem:noconc} that $D_{\Z}$ is not too concentrated on any cluster $B_{\Z}^i$. Recall that $D_{\Z}(\bigcup_i B_{\Z}^i) \geq 199/200$. Therefore, there exist disjoint sets $M$ and $M'$ such that\footnote{Consider building the set $M$ greedily by adding terms until the mass exceeds $1/400$. Upon adding the final term to overcome the threshold, the total mass is at most $397/400$ since no term is larger than $99/100$. Therefore, the remainder of terms not included in $M$ must have a mass of at least $199/200 - 397/400 = 1/400$.}
\begin{equation*}
    D_{\Z}\qty(\bigcup_{i \in M} B_{\Z}^i) \geq \frac{1}{400} \quad \text{and} \quad D_{\Z}\qty(\bigcup_{i \in M'} B_{\Z}^i) \geq \frac{1}{400} .
\end{equation*}
Furthermore, recall that since the distance between any two clusters is at least $c_2 n$, the same distance lower bound holds for the union of clusters over $M$ and $M'$ as well. This proves that the distribution $D_{\Z}$ is \emph{well-spread} which implies a circuit lower bound due to the following known fact\footnote{Versions of this lower-bound for well-spread distributions can be found in \cite{comb-nlts}, Theorem 4.6], \cite[Corollary 43]{8104078}, and \cite[Lemma 13]{ anshu_et_al:LIPIcs.ITCS.2022.6}.} (see Appendix for proof):
\begin{fact}
Let $D$ be a probability distribution on $n$ bits generated by measuring the output of a quantum circuit in the standard basis. If two sets $S_1, S_2 \subset \bits^{n}$ satisfy $D(S_1), D(S_2) \geq \mu$, then the depth of the circuit is at least
\begin{equation*}
  \frac{1}{3} \log \qty( \frac{ \mathrm{dist}(S_1, S_2)^2}{ 400 n \cdot \log\frac{1}{\mu}}).
\end{equation*}
\label{fact:well-spread-to-lb}
\end{fact}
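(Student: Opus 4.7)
The plan is to prove this fact via the standard lightcone argument for shallow quantum circuits. Let $U$ be a depth-$d$ circuit with $2$-local gates producing $|\psi\rangle = U|0^n\rangle$, with outcome distribution $D$. The structural input that I would use repeatedly is that for any subset $T \subset [n]$, the reduced state $\psi_T = \tr_{\bar T}|\psi\rangle\!\langle \psi|$ is determined entirely by $U$ restricted to the backward lightcone $L(T) \subseteq [n]$ of size $|L(T)| \leq |T| \cdot 2^d$; equivalently, the marginal of $D$ on $T$ can be produced by a depth-$d$ circuit on only $|L(T)|$ input qubits.

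Next I would introduce the normalized conditional states $|\phi_i\rangle = \Pi_{S_i}|\psi\rangle/\sqrt{D(S_i)}$ for $i \in \{1,2\}$, where $\Pi_{S_i}$ is the standard-basis projector onto $S_i$. These satisfy $|\langle \psi|\phi_i\rangle|^2 = D(S_i) \geq \mu$ and are supported in the computational basis on sets whose pairwise Hamming distance is $\geq t := \mathrm{dist}(S_1,S_2)$. The strategy is to derive a contradiction between two estimates on a random coordinate subset $T$ of size $k$: on the one hand, the lightcone structure together with the overlap $\sqrt{\mu}$ forces $(\phi_i)_T$ to be close to $\psi_T$ whenever $|L(T)| \ll n$, so that $(\phi_1)_T$ and $(\phi_2)_T$ are close to each other; on the other hand, because the supports of $\phi_1,\phi_2$ differ in at least $t$ of the $n$ coordinates, a uniformly random $T$ picks up an expected Hamming disagreement of order $kt/n$, which translates into a lower bound on the trace/TV distance between $(\phi_1)_T$ and $(\phi_2)_T$.

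Executing the quantitative balance — substituting $|L(T)| \leq k\cdot 2^d$ for the lightcone size, using a Markov/second-moment argument to promote the expected Hamming disagreement into a high-probability TV lower bound on $\|(\phi_1)_T - (\phi_2)_T\|_1$, and optimizing over the free parameter $k$ — should yield the claimed inequality. I expect the main obstacle to be the first estimate: sharpening the closeness bound between $(\phi_i)_T$ and $\psi_T$ so that its error scales with $|L(T)|/n$ and $\log(1/\mu)$, rather than with the trivial $\sqrt{1-\mu}$ coming directly from Fuchs--van de Graaf. This entropic/martingale-type refinement on how conditioning on $S_i$ can perturb a lightcone-local marginal is the technical heart of the prior proofs cited in the footnote, and it is precisely what produces the $2^{3d}$ scaling responsible for the $\tfrac13$ coefficient in the final bound.
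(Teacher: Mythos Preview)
Your approach is genuinely different from the paper's, and the step you flag as the ``main obstacle'' is not how the paper proceeds. The paper does not use random restrictions or conditioned marginals. Instead it builds an \emph{approximate ground-state projector}: writing $\ket{\rho}=U\ket{0}^{\otimes m}$ for a depth-$t$ circuit $U$ (with $m\leq 2^t n$), one forms the commuting Hamiltonian $G=\frac{1}{m}\sum_i U\ketbra{1}_i U^\dagger$, whose unique ground state is $\ket{\rho}$, and applies a degree-$f$ Chebyshev-based polynomial $P$ so that $\|P(G)-\ketbra{\rho}\|_\infty\leq\exp(-f^2/(100\cdot 2^t n))$ while $P(G)$ is $(f\cdot 2^t)$-local in the computational basis. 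Choosing $f=\mathrm{dist}(S_1,S_2)/2^{t+1}$ makes $P(G)$ too local to connect $S_1$ to $S_2$, hence $\Pi_{S_1}P(G)\Pi_{S_2}=0$; combined with the exact identity $\|\Pi_{S_1}\ketbra{\rho}\Pi_{S_2}\|_\infty=\sqrt{D(S_1)D(S_2)}\geq\mu$, this gives the bound directly. The factor $2^{3t}$ arises from $m\le 2^t n$ together with $f^2\sim u^2/2^{2t}$, not from any iterated-lightcone bookkeeping.

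As written, your route has a real gap at precisely the point you identify. The claim that $(\phi_i)_T$ is close to $\psi_T$ with error governed by $|L(T)|/n$ and $\log(1/\mu)$ is neither proved here nor what the cited references establish; the paper's own proof is the Chebyshev argument above. Conditioning on a global event can shift a single-coordinate marginal by $\Theta(1)$ even when its lightcone has size $1$ (take $\ket{\psi}=\ket{+}^{\otimes n}$ and $S_1=\{x:x_1=0\}$), so any salvageable version must be an averaged statement over random $T$, which you have not supplied. The second half---turning ``expected Hamming disagreement $kt/n$ on a random $T$'' into a trace-distance lower bound between $(\phi_1)_T$ and $(\phi_2)_T$---also does not follow, since the \emph{restricted} supports can overlap even when the full strings are $t$-far. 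Lightcone-based proofs of statements of this type do exist, but they work via independence of coordinates with disjoint backward lightcones and concentration of Lipschitz functionals of $D$, not via closeness of conditioned local marginals; your sketch does not land on either argument.
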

An immediate application of this fact gives a circuit-depth lower bound of $\Omega(\log n)$ for $D_{\Z}$ since $\mathrm{dist}(S_1, S_2) \geq c_2 n$ and $\mu = \frac{1}{400}$. Since the circuit depth of $D_{\Z}$ is at most one more than the circuit depth of~$\psi$, the lower bound is proven.

\begin{theorem}[Formal statement of the NLTS theorem]
Consider a $[[n,k,d]]$ CSS code satisfying Property~\ref{property} with parameters $\delta_0, c_1, c_2$ as stated. Let $\H$ be the corresponding local Hamiltonian. Then for
\begin{equation*} 
    \eps < \frac{1}{400c_1} \qty(\frac{\min\{{m_{\X},m_{\Z}}\}}{n}) \cdot \min \left\{ \qty(\frac{k-1}{4n})^2, \delta_0, \frac{c_2}{2} \right\},
\end{equation*}
and every state $\psi$ such that $\tr(\H \psi) \leq \eps n$, the circuit depth of $\psi$ is at least $\Omega(\log n)$. For constant-rate and linear-distance codes satisfying\footnote{While the distance parameter $d$ does not appear in the bound on $\eps$, Property \ref{property} for $\delta = 0$ implies linear distance.} Property \ref{property}, the bound on $\eps$ is a constant.
\end{theorem}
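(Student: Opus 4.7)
The plan is to assemble the four structural observations already sketched in the paragraphs preceding the theorem into a single formal argument, and then verify that the closed-form bound on $\eps$ in the statement simultaneously enforces each of the auxiliary inequalities used along the way.

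First, given any state $\psi$ with $\tr(\H \psi) \leq \eps n$, I would set
\[
    \eps_1 \defeq \frac{200 n}{\min\{m_\X, m_\Z\}} \cdot \eps
\]
and apply the Markov-style estimate from the ``supports of the underlying classical distributions'' paragraph to obtain $D_\Z(G_\Z^{\eps_1}), D_\X(G_\X^{\eps_1}) \geq 199/200$.

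Second, I would check that the stated bound on $\eps$ forces $2\eps_1 \leq \delta_0$ and $4 c_1 \eps_1 < c_2$; both follow by direct substitution into the definition of $\eps_1$ (using that the minimum in the theorem includes $\delta_0$ and $c_2/2$). With these, Property \ref{property} is applicable to sums $x \oplus y \in G_\Z^{2\eps_1}$, and the equivalence-relation argument from the ``supports are well clustered'' paragraph partitions $G_\Z^{\eps_1}$ into clusters $B_\Z^1, B_\Z^2, \ldots$ that are pairwise at Hamming distance $\geq c_2 n$. An identical analysis in the Hadamard basis produces clusters $\{B_\X^j\}$ with the same separation.

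Third, the stated bound also yields $2 c_1 \eps_1 \leq ((k-1)/(4n))^2$, so Lemma \ref{lem:noconc} applies. After possibly swapping the roles of $\X$ and $\Z$, I may assume $D_\Z(B_\Z^i) < 99/100$ for every $i$. A greedy selection of cluster indices then produces disjoint sets $M, M'$ with $D_\Z\bigl(\bigcup_{i\in M} B_\Z^i\bigr)$ and $D_\Z\bigl(\bigcup_{i \in M'} B_\Z^i\bigr)$ each at least $1/400$; this is possible precisely because no single cluster has mass $\geq 99/100$ while $\bigcup_i B_\Z^i$ has mass $\geq 199/200$. I would then invoke Fact \ref{fact:well-spread-to-lb} with $S_1 = \bigcup_{i \in M} B_\Z^i$, $S_2 = \bigcup_{i \in M'} B_\Z^i$, $\mu = 1/400$, and $\mathrm{dist}(S_1, S_2) \geq c_2 n$, yielding an $\Omega(\log n)$ lower bound on the circuit depth generating $D_\Z$, and hence on the depth preparing $\psi$ itself (up to an additive constant for the measurement layer).

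The main ``obstacle'' here is bookkeeping rather than invention: the three quantities inside the $\min$ in the theorem statement, namely $((k-1)/(4n))^2$, $\delta_0$, and $c_2/2$, are exactly the three upper bounds that $2 c_1 \eps_1$ must respect in Steps~2 and~3 above, while the prefactor $\frac{1}{400 c_1} \cdot (\min\{m_\X, m_\Z\}/n)$ is precisely what one needs to translate the bound on $\eps$ into a bound on $2 c_1 \eps_1$ through the definition of $\eps_1$. For constant-rate and linear-distance codes one has $k/n = \Theta(1)$ and $\min\{m_\X, m_\Z\}/n = \Theta(1)$, so the resulting threshold on $\eps$ is a positive constant, and the theorem follows.
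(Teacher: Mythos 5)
Your proposal is correct and follows the paper's own proof essentially step for step: the same choice of $\eps_1$, the same Markov estimate, the same clustering via Property~\ref{property}, the same invocation of Lemma~\ref{lem:noconc} and the greedy split into $M, M'$, and the same application of Fact~\ref{fact:well-spread-to-lb}. The only detail worth flagging is that the stated bound on $\eps$ directly yields $2 c_1 \eps_1 < \delta_0$ rather than $2\eps_1 \leq \delta_0$, so the clustering step implicitly uses $c_1 \geq 1$ (a minor bookkeeping point shared with the paper itself).
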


\section{Proof that Property~\ref{property} holds for families of good qLDPC codes}\label{sec:qLDPC}
In this section we show that families of good qLDPC codes have Property~\ref{property}, which concludes our proof.
First, we show that a certain expansion property of binary matrices, which generalizes small-set ((co-)boundary) expansion, gives rise to a clustering property.
This clustering property readily implies Property~\ref{property}.

\vspace{0.1in}

\noindent\textbf{Expansion implies clustering.}
In this section we are going to show that an expansion property of binary matrices implies a clustering property.
It follows that Property~\ref{property} holds for families of good qLDPC by as their check matrices are expanding.
For the remainder of this section let $V \subseteq \mathbb{F}_2^n$.
\begin{definition}\label{def:expanding_matrix}
    We call a matrix $A \in \mathbb{F}_2^{m\times n}$ \emph{$(\alpha,\beta)$-expanding on~$V$} if for all $x \in V$ with $|x| \leq \alpha n$ it holds that $|Ax|\geq \beta |x|$.
\end{definition}
Note that Definition~\ref{def:expanding_matrix} gives small-set \mbox{(co-)}minimal \mbox{(co-)}boundary expansion~\cite{linial2006homological} when~$A$ is a (co-)boundary operator and~$V$ is chosen to only contain (co-)chains that are minimal with respect to (co-)boundaries.
\begin{definition}\label{def:Vclustering}
    We call a matrix $A \in \mathbb{F}_2^{m\times n}$ \emph{$(\delta,\mu,\nu)$-clustering on~$V$} if for all $x \in V$ with $|Ax| \leq \delta n$ it holds that either $|x| \leq \mu n$ or $|x| \geq \nu n$.
\end{definition}
The following lemma shows that expansion implies clustering.
\begin{lemma}\label{lem:ExpandingToClustering}
    A matrix $A \in \mathbb{F}_2^{m\times n}$ that is $\left(\nu,\frac{\delta}{\mu}\right)$-expanding on~$V$ is $(\delta,\mu,\nu)$-clustering on $V$.
\end{lemma}
\begin{proof}
    Assume $A \in \mathbb{F}_2^{m\times n}$ is $\left(\nu,\frac{\delta}{\mu}\right)$-expanding on $V$.
    Take $x \in V$ with $|Ax| \leq \delta n$.
    If $|x| \leq \mu n$ then we are done.

    Assume $|x| > \mu n$.
    It follows that
    \begin{align*}
        |Ax| \leq \delta n < \frac{\delta}{\mu} |x| .
    \end{align*}
    Since $A$ is $\left(\nu,\frac{\delta}{\mu}\right)$-expanding on $V$ it follows that $|x| > \nu n$.
\end{proof}
In order for the statement of Lemma~\ref{lem:ExpandingToClustering} to be non-trivial we require $\mu < \nu$, which is always possible to achieve for any $(\alpha,\beta)$-expanding matrix with $\alpha,\beta>0$ by choosing $0<\delta < \alpha \beta$.

We note that Lemma~\ref{lem:ExpandingToClustering} immediately implies \cite[Theorem 4.3]{comb-nlts} by setting $V=\mathbb{F}_2^n$ and observing that the parity check matrix of expander codes are $(\alpha,\beta)$-expanding.
Similarly, for any CSS quantum code $(C_{\X}, C_{\Z})$ with $m_\X,m_\Z = \Omega(n)$ we have that Property~\ref{property} follows from Lemma~\ref{lem:ExpandingToClustering} by setting 
$$V_{\Z}^\delta = \lbrace y\in G_{\Z}^\delta \mid \forall s\in C_{\X}^\perp : |y| \leq |y + s| \rbrace$$
i.e.\ considering only approximate code words that are minimal with respect to adding stabilizers, and showing that $H_{\Z}$ is expanding on~$V_{\Z}^\delta$ for sufficiently small $\delta$ and mutatis mutandis for the $X$-side, setting
$$V_{\X}^\delta = \lbrace y\in G_{\X}^\delta \mid \forall s\in C_{\Z}^\perp : |y| \leq |y + s| \rbrace .$$

\vspace{0.1in}

\noindent\textbf{Balanced products and good qLDPC codes.}
Quantum codes can be constructed from classical codes using product constructions.
One such product is known as hypergraph product~\cite{5205648}, which is the tensor product of chain complexes when described in terms of homological algebra, see \cite{audoux2019tensor} and~\cite[Section~IV]{breuckmann2021quantum}.
However, quantum codes obtained this way can only yield distances $d \leq O(\sqrt{n})$, as code words of the classical input codes lift to logical operators in the product and the number of physical qubits of the product code grows as the product of the lengths of the input codes.

This limitation can be overcome by utilizing a more general product construction called balanced product \cite{Breuckmann2020BalancedPQ,breuckmann2021quantum}.
The idea is that, assuming a group $G$ acts on the input codes, we can take the tensor product and factor out the action of $G$, see \cite[Section~IV-E]{Breuckmann2020BalancedPQ} and \cite[Appendix~A]{breuckmann2021quantum} for details.

Suitable input codes can be obtained as follows:
for a group~$G$, consider a right Cayley graph $\cay^r (G,A)$ and a left Cayley graph $\cay^\ell (G,B)$ for two generating sets $A,B\subset G$, which are assumed to be symmetric, i.e.\ $A = A^{-1}$ and $B = B^{-1}$ and of the same cardinality $\Delta = |A| = |B|$.
Further, we define the double-covers of $\cay^r (G,A)$ and $\cay^\ell (G,B)$ that we will denote $X_2^r = \cay^r_2 (G,A)$ and $X_2^\ell = \cay^\ell_2 (G,B)$.
Assuming that $X^r_2$ and $X^\ell_2$ are sufficiently good spectral expanders, we can define Sipser-Spielman expander codes~\cite{sipser1996expander} from both of these graphs using local codes~$K$ and~$L$.
We denote these codes $C_\bullet(X_2^r,L)$ and $C_\bullet(X_2^\ell,K)$.\footnote{The notation is taken from homological algebra, as graph codes combined with local codes can be understood in terms of homology with local coefficients, as pointed out by Meshulam~\cite{meshulam2018graph}.}
We will assume that the Cayley graphs are Ramanujan graphs, so that for the spectrum of the adjacency matrix we have $\lambda = \max \{ |\lambda_2|, |\lambda_n| \} \leq 2 \sqrt{\Delta - 1}$, such as those constructed in \cite{lubotzky1988ramanujan,margulis1988explicit}.

In \cite{asymptotically-good-quantum-codes} Panteleev--Kalachev were the first to prove a linear distance bound for the balanced product between the code~$C_\bullet(X_2^r,K)$ and the dual code~$C_\bullet(X_2^r,L)^*$, denoted as $C_\bullet(X_2^r,K) \otimes_G C_\bullet(X_2^\ell,L)^*$, assuming a suitable choice of~$K$ and~$L$, see conjecture in \cite[Section~IV]{Breuckmann2020BalancedPQ}.
Similarly, Dinur~et~al.~\cite{linear-time-decoder-quantum-codes} proved a linear distance bound for the balanced product code $C_\bullet(X_2^r,K) \otimes_G C_\bullet(X_2^\ell,L) = C_\bullet(X_2^r\times_G X_2^\ell, K\otimes L)$.
The right-hand side refers to the fact that we can associate a local tensor code $K\otimes L$ directly to the balanced product complex $X_2^r\times_G X_2^\ell$ \cite[Section~IV-B]{Breuckmann2020BalancedPQ}.
In~\cite{quantum-tanner-codes} Leverrier--Z\'emor consider a construction that they call `quantum Tanner code', where they assign local tensor codes in a different way to the balanced product complex $X_2^r\times_G X_2^\ell$, leading to a more symmetric quantum code.

\vspace{0.1in}

\noindent\textbf{Expansion properties.}
For the remainder of this section let~$d_{\operatorname{loc}}$ be the smaller of the minimum distances of~$K$ and~$L$, as well as~$m_K$ and~$m_L$ their respective number of checks.

All distance proofs utilize a property of the local codes called robustness, which roughly speaking makes it possible to bound the weight of elements of the dual tensor code via elements of its factors.
The local codes $K$ and $L$ are called $\rho$-robust if for all $c\in K\otimes \mathbb{F}_2^{\Delta} + \mathbb{F}_2^{\Delta}\otimes L$ there exist $k\in K\otimes \mathbb{F}_2^{\Delta}$ and $l \in \mathbb{F}_2^{\Delta}\otimes L$ such that $c = k+l$ and $|c| \geq \rho (|k|_{\operatorname{row}} + |l|_{\operatorname{col}})$, where we interpret $c,k,l$ as matrices and $|\cdot|_{\operatorname{row}}$ and $|\cdot|_{\operatorname{col}}$ is the number of non-zero rows and columns, respectively.
It is shown in \cite[Theorem~1.2]{linear-time-decoder-quantum-codes} that randomized constructions give~$K$ and~$L$ such that they and their duals have linear distance and robustness.

The following theorem immediately follows from Theorem~3.8 and Corollary~3.10 in Ref.~\cite{linear-time-decoder-quantum-codes}.
\begin{theorem}\label{thm:CCexpansion}
    The parity-check matrices of the balanced product code $$C_\bullet(X_2^r,L)\otimes_G C_\bullet(X_2^\ell,K) = C_\bullet(X_2^r\times_G X_2^\ell, K\otimes L)$$ have the following expansion properties:
    \begin{itemize}
        \item $H_{\Z}$ is $\left( \alpha, \beta \right)$-expanding on $V_{\Z}^\delta$, with
        $$
            \alpha = \frac{\eta}{4 (m_K + m_L) \Delta}, \quad
            \beta = \frac{1}{\frac{2}{\eta^\perp} \Delta^3 + \Delta^2 + \Delta}
        $$
        and $0 < \delta < \alpha \beta$
        \item $H_{\X}$ is $\left( \alpha', \beta' \right)$-expanding on $V_{\X}^{\delta'}$, with
        $$
            \alpha' = \frac{\eta}{4 (m_K + m_L) \Delta}, \quad
            \beta' = \frac{\eta}{2}
        $$
        and $0 < \delta' < \alpha' \beta'$
    \end{itemize}
    where $$\eta = \frac{2 \rho d_{\operatorname{loc}} - 2 \lambda \rho -16 \lambda \Delta}{(\rho + 4) \Delta}$$
    and $\eta^\perp$ is defined similarly for the dualized local codes.
\end{theorem}
As discussed previously, Property~\ref{property} follows for the balanced product codes $C_\bullet(X_2^r,L)\otimes_G C_\bullet(X_2^\ell,K)$ by combining Theorem~\ref{thm:CCexpansion} and Lemma~\ref{lem:ExpandingToClustering}.

Similarly, for the construction considered by Leverrier-Z\'emor, it was shown by Hopkins--Lin that the parity check matrices are expanding, see \cite[Theorem 8.2]{hopkins2022explicit}.
\begin{theorem}
    Let $\epsilon \in (0,1/2)$
    then for $\Delta$ large enough it holds with probability approaching 1 that the parity-check matrices~$H_{\X}$ and~$H_{\Z}$ are both $\left(\alpha, \beta \right)$-expanding on $V_{\X}^\delta$ and $V_{\Z}^\delta$, respectively, where
    $$\alpha = \frac{{d_{\operatorname{loc}}}}{6\Delta^{5/2+\epsilon}} \text{ and } \beta = \frac{56}{\Delta^{3-2\epsilon}}$$
    and $0 < \delta < \alpha \beta$.
\end{theorem}

We state without proof that making sufficient adjustments to the distance proof of Panteleev--Kalachev~\cite{asymptotically-good-quantum-codes}, it can be shown that the parity check matrices of the code $C_\bullet(X_2^r,L)\otimes_G C_\bullet(X_2^\ell,K)^*$ are $(\alpha,\beta)$-expanding on suitably chosen subsets $V_\X$ and $V_\Z$ with $\alpha,\beta > 0$ not depending on $n$.
Hence, Property~\ref{property} also holds for this code family.

\section*{Acknowledgements}
We thank Nolan Coble, Matt Coudron, Jens Eberhardt, Lior Eldar, David Gamarnik, Sevag Gharibian, Yeongwoo Huang, Jon Nelson, Madhu Sudan and Umesh Vazirani for helpful discussions. The majority of this work was conducted while NPB was affiliated with University College London and CN was affiliated with the University of California, Berkeley. AA acknowledges support through the NSF CAREER Award No. 2238836 and NSF award QCIS-FF: Quantum Computing \& Information Science Faculty Fellow at Harvard University (NSF 2013303). NPB acknowledges support through the EPSRC Prosperity Partnership in Quantum Software for Simulation and Modelling (EP/S005021/1). CN was supported by NSF Quantum Leap Challenges Institute Grant number OMA2016245. Part of this work was completed while AA and CN were participants in the Simons Institute for the Theory of Computing \emph{The Quantum Wave in Computing: Extended Reunion}. 

\appendix

\section{Omitted Proofs}

\begin{proof_of}{Fact \ref{fact:fourier-transform}}
Consider a purification of the state $\psi$ as $\ket{\psi}$ on a potentially larger Hilbert space. Write $\ket{\psi}$ as $\sum_{z \in \bits^n} \ket{\psi_z}\otimes \ket{z}$ where the second register is the original $n$ qubit code-space. Define $C \defeq \sum_{z\in S} \|\ket{\psi_z}\|^2$ and
\begin{equation*}
    \ket{\psi'}= \frac{1}{\sqrt{C}}\sum_{z\in S} \ket{\psi_z}\otimes \ket{z}\defeq\sum_{z\in S} \ket{\psi'_z}\otimes \ket{z}.
\end{equation*}
Since $C=D_{\Z}(S) \defeq 1-\eta$, by the gentle measurement lemma \cite{Winter99} we have $\frac{1}{2}\|\ketbra{\psi}-\ketbra{\psi'}\|_1\leq 2\sqrt{\eta}$. Measuring $\ket{\psi'}$ in the computational basis, we obtain a string $z \in S$ with probability 
$\|\ket{\psi'_z}\|^2$. Measuring $\ket{\psi'}$ in the Hadamard basis, we obtain a string $x$ with probability 
\begin{equation*}p(x)\defeq \frac{1}{2^n}\left\|\sum_z (-1)^{\langle x, z \rangle}\ket{\psi'_z}\right\|^2 = \frac{1}{2^n}\left(\sum_{z,w} (-1)^{\langle x, z\oplus w \rangle }\braket{\psi'_w}{\psi'_z}\right).\end{equation*}
Using Cauchy-Schwarz,
\begin{align*}
   p(x)&= \frac{1}{2^n}\left(\sum_{z,w\in S} (-1)^{\langle x, z\oplus w \rangle}\braket{\psi'_w}{\psi'_z}\right)\\
   &\leq \frac{1}{2^n}\sqrt{|S|^2\sum_{z,w\in S}|\braket{\psi'_w}{\psi'_z}|^2}\\
   &\leq \frac{|S|}{2^n}\sqrt{\sum_{z,w\in S}\|\ket{\psi'_w}\|^2\|\ket{\psi'_z}\|^2} = \frac{|S|}{2^n}
\end{align*}
Thus, we find
\begin{equation*}
    \sum_{x\in T}p(x) \leq \frac{|T||S|}{2^n}.
\end{equation*}
Since $\frac{1}{2}\|\ketbra{\psi}-\ketbra{\psi'}\|_1\leq 2\sqrt{\eta}$, we conclude that
$D_{\X}(T) \leq 2\sqrt{\eta} +\frac{|S|\cdot |T|}{2^n}$.
\end{proof_of}

\vspace{0.1in}

\begin{proof_of}{Fact \ref{fact:well-spread-to-lb}}
Let $\ket{\rho}=U\ket{0}^{\otimes m}$ on $m\geq n$ qubits, where $U$ is a depth $t$ quantum circuit such that when $\ket{\rho}$ is measured in the standard basis, the resulting distribution is $p$. 
Note that $m\leq 2^t n$ without loss of generality (see \cite[Section 2.3]{anshu_et_al:LIPIcs.ITCS.2022.6} for a justification based on the light cone argument).
The Hamiltonian 
\begin{equation*}G = \Exp_{i = 1}^m U \ketbra{1}_i U^\dagger\end{equation*} has $\ket \rho$ as its unique ground-state, is commuting, has locality $2^t$, and has eigenvalues $0, 1/m, 2/m, \ldots 1$. There exists a polynomial $P$ of degree $f$, built from Chebyshev polynomials, such that $P(0) = 1$ and
\begin{equation*}
    \abs{P(i/m)} \leq \exp\qty(- \frac{f^2}{100m}) \leq  \exp\qty(- \frac{f^2}{100 \cdot 2^t n}) \ \forall \ i = 1, \ldots , m.
\end{equation*}
See \cite[Theorem 3.1]{AAG21} (or \cite{KahnLS96, BuhrmanCWZ99}) for details on the construction of~$P$. Applying the polynomial $P$ to the Hamiltonian $G$ results in an \emph{approximate ground-state projector}, $P(G)$, such that
\begin{equation*}\|\ketbra{\rho}-P(G)\|_\infty \leq  \exp\qty(- \frac{f^2}{100 \cdot 2^t n})\end{equation*}
Furthermore, $P(G)$ is a $f \cdot 2^t$ local operator. Setting $u \defeq \mathrm{dist}(S_1, S_2)$ and choosing $f \defeq \frac{u}{2^{t+1}}$, we obtain
\begin{equation*}\|\ketbra{\rho}-P(G)\|_\infty\leq \exp\qty({-\frac{u^2}{400 \cdot 2^{3t}  n}}).\end{equation*}
Let $\Pi_{S_1}, \Pi_{S_2}$ be projections onto the strings in sets $S_1, S_2$ respectively. Note that  $\Pi_{S_1} P(G) \Pi_{S_2}=0$, which implies 
\begin{equation*}\|\Pi_{S_1}\ketbra{\rho}\Pi_{S_2}\|_\infty \leq \exp\qty({-\frac{u^2}{400\cdot 2^{3t}\cdot n}}).\end{equation*}
However 
\begin{align*}
&\|\Pi_{S_1}\ketbra{\rho}\Pi_{S_2}\|_\infty=\sqrt{\bra{\rho}\Pi_{S_1}\ket{\rho}\cdot \bra{\rho}\Pi_{S_2}\ket{\rho}} = \sqrt{p(S_1) p(S_2)} \geq \mu.    
\end{align*}
Thus, 
$ 2^{3t} \geq \frac{u^2}{400\cdot \log\frac{1}{\mu} \cdot n}$, which rearranges into the desired statement.
\end{proof_of}

\bibliographystyle{ACM-Reference-Format}
\bibliography{references}

\end{document}